\def\ps@pprintTitle{%
 \let\@oddhead\@empty
 \let\@evenhead\@empty
 \def\@oddfoot{}%
 \let\@evenfoot\@oddfoot}
  \tikzstyle{whitesq}=[rectangle,draw=black,fill=white,thin,inner sep=1.5pt,minimum size=6mm]
  \tikzstyle{whitecirc}=[circle,draw=black,fill=white,thin,inner sep=1.5pt,minimum size=6mm]
  \tikzstyle{fake}=[circle,draw=black,fill=white,thin,inner sep=1.5pt,minimum size=6mm,opacity=0.0]
\newtheorem{theorem}{Theorem}
\newcommand{\pathweight}[1]{w\left[{{#1}}\right]}
\journal{Information Processing Letters}  % for the actual submission to IPL
\begin{document}

\begin{frontmatter}

\title{Hardness of the Pricing Problem for Chains in Barter Exchanges}

\author[cmu]{Benjamin Plaut}
\author[cmu]{John P. Dickerson}
\author[cmu]{Tuomas Sandholm}

\address[cmu]{Computer Science Department\\Carnegie Mellon University}

\begin{abstract}
Kidney exchange is a barter market where patients trade willing but medically incompatible donors. These trades occur via cycles, where each patient-donor pair both gives and receives a kidney, and via chains, which begin with an altruistic donor who does not require a kidney in return. For logistical reasons, the maximum length of a cycle is typically limited to a small constant, while chains can be much longer.  Given a compatibility graph of patient-donor pairs, altruists, and feasible potential transplants between them, finding even a maximum-cardinality set of vertex-disjoint cycles and chains is NP-hard. There has been much work on developing provably optimal solvers that are efficient in practice. One of the leading techniques has been \emph{branch and price}, where column generation is used to incrementally bring cycles and chains into the optimization model on an as-needed basis.  In particular, only positive-price columns need to be brought into the model.  We prove that finding a positive-price chain is NP-complete. This shows incorrectness of two leading branch-and-price solvers that suggested polynomial-time chain pricing algorithms.
\end{abstract}

\begin{keyword}
kidney exchange \sep matching markets \sep integer programming \sep branch and price
\end{keyword}

\end{frontmatter}

%% JPD!!! -- in general, be careful about how much of the boilerplate we copy from the AAAI-16 and EC-16 papers -- IPL probably cares more than CS journals

%%%%%%%%%%%%%%%%%%%%%%%%%%%%%%%%%%%%%%%%%%%%%%%%%%%%%%%%%%%%%%%%%%%%%%%%%%%%%%%%
%%%%%%%%%%%%%%%%%%%%%%%%%%%%%%%%%%%%%%%%%%%%%%%%%%%%%%%%%%%%%%%%%%%%%%%%%%%%%%%%
%%%%%%%%%%%%%%%%%%%%%%%%%%%%%%%%%%%%%%%%%%%%%%%%%%%%%%%%%%%%%%%%%%%%%%%%%%%%%%%%
\section{Introduction}%
\label{sec:intro}

In barter markets, participants swap items for items rather than for money.  Barter markets have been organized for many categories of items such as  holiday home time, nurse shifts, used goods, and even shoes.
Kidney exchange is an organized barter market where patients in need of a kidney trade their paired but incompatible donors with other participants in the exchange~\cite{Roth04:Kidney}.  Mathematically, a kidney exchange is typically modeled as a directed \emph{compatibility graph} $G = (V, E)$. The set of vertices $V$ is partitioned into $P$ and $A$, where vertices in $P$ represent patient-donor pairs and vertices in $A$ represent altruistic donors, who enter the exchange without a paired patient. For each $u,v \in P$, the edge $(u,v)$ exists if the donor of pair $u$ is compatible with the patient of pair $v$. Similarly, for each $a\in A$ and $v \in P$, the edge $(a,v)$ exists if altruist $a$ is compatible with the patient of pair $v$. These edges may also have weights, representing the relative value of a potential transplant.

Given a compatibility graph $G$, the \emph{weighted clearing problem} is to find a maximum-weight vertex-disjoint set of cycles of length at most $L$ and chains of length at most $K$.  Even the maximum-cardinality clearing problem with $L \geq 3$ is NP-hard~\cite{Abraham07:Clearing}.  In practice, $L$ is limited to a small constant (typically $L=3$) due to the logistical difficulty of arranging all transplants in a cycle simultaneously; however, due to the non-simultaneous execution of chains, $K$ is often much greater than $L$---and can even be allowed to grow with $|V|$.%Furthermore, since there are an exponential number of possible cycles and chains, even writing down the full integer programming model takes an exponential number of space.  %% JPD!!! Not true -- HPIEF in the EC-16 paper ..

%% JPD!!! -- need to say a sentence or two about solving problems with IPs, that they're sometimes big, then transition into B&P
Integer programming techniques power optimal clearing engines in fielded kidney exchanges.  A leading approach to solving these integer program models---which can be quite large due to the $O(|P|^L)$ cycles and $O(|A||P|^{K})$ chains in a compatibility graph---is \emph{branch and price}~\cite{Abraham07:Clearing}.  Branch and price~\cite{Barnhart98:Branch-and-Price} is a technique where only a subset of the columns (cycles and chains) are kept in the model, and promising columns are incrementally generated and added to the reduced model until optimality can be proven.  Such promising columns are found by solving the \emph{pricing problem}, which searches for at least one positive price column (i.e., variable) to add to the model---or shows that none exist. Once no more positive price variables exist, optimality has been proven for that node in the branch-and-bound search tree, and the search can proceed further in the tree.

Quickly solving the pricing problem at nodes in the search tree is important for overall runtime.  Recently, it was shown that determining whether a positive price \emph{cycle} exists can be solved in polynomial time~\cite{Glorie14:Kidney,Plaut16:Fast}. Both Glorie \emph{et al.}~\cite{Glorie14:Kidney} and Plaut \emph{et al.}~\cite{Plaut16:Fast} also use a variant of their cycle-pricing algorithms for chains.  In this paper, we show that not only are those latter algorithms incorrect, but the underlying problem---finding a positive price \emph{chain}---is, in fact, NP-complete.

%% JPD!!! -- probably need to say something about the point of LP relaxations in IP solving, and define the ``LP'' acronym that is used in this section
\section{The pricing problem in kidney exchange}%
\label{sec:pricing}
We now formally define the pricing problem in the context of kidney exchange, but it applies to other barter markets as well.  The pricing problem is to find at least one positive price cycle or chain, or show that none exist.  The price of a cycle or chain $c$ is $\sum_{(u, v) \in c} w_{(u,v)} - \sum_{v\in c}\delta_v$, where $w_{(u,v)}$ is the weight of edge $(u,v)$, and $\delta_v$ is the dual value of vertex $v$ in the linear program  relaxation. Glorie \emph{et al.}~\cite{Glorie14:Kidney} and Plaut \emph{et al.}~\cite{Plaut16:Fast} show how determining whether a positive price cycle in the compatibility graph $G = (V,E)$ exists is equivalent to finding a negative weight cycle in a \emph{reduced graph} $G' = (V,E')$, where each edge $e' = (u,v) \in E'$ exists if and only if $(u,v) \in E$, and $e'$ has \emph{reduced weight} $r_{(u,v)} = \delta_v - w_{(u,v)}$.

A similar equivalence holds for chains. We must be careful, however, since the number of vertices in a chain exceeds the number of edges by $1$. We now define $r_{(u,v)}$ as follows:
\[r_{(u,v)} =  \begin{cases}
      \delta_v - w_{(u,v)} & u\in P\\
      \delta_u + \delta_v - w_{(u,v)} & u\in A
   \end{cases}
\]

Since an outgoing edge from an altruist will only ever be used in a chain, this ensures that a chain has positive price in $G$ if and only if it has negative weight in the reduced graph $G'$.

\section{Counterexample to two prior algorithms}
In this section, we provide counterexamples to the pricing algorithms of both Glorie \emph{et al.}~\cite{Glorie14:Kidney} and Plaut \emph{et al.}~\cite{Plaut16:Fast}.  Both previous algorithms use Bellman-Ford-style search in the reduced graph, initiated from each altruist as the source, to find negative-weight chains. Ideally, we would like to find the \emph{shortest} paths using each vertex at most once, but this is NP-hard in the presence of negative cycles~\cite{Plaut16:Fast}. However, we need not find the shortest paths beginning at each altruist: we only need to determine whether there exists any negative path starting at any altruist.

In the presence of negative cycles, traditional Bellman-Ford may generate paths with internal loops, which are invalid in our context. Plaut \emph{et al.}~\cite{Plaut16:Fast} handle this by preventing Bellman-Ford from looping during execution. As a result, the generated paths may not be the shortest, and a given negative chain may not be found.

For the version of the algorithm for cycles, Plaut \emph{et al.}~\cite{Plaut16:Fast} show that although there may be negative cycles that are not found, at least one negative cycle will be found, if any exist. Their proof of the version of the algorithm for chains is incorrect in general, however, as it implicitly assumes that the chain length cap and cycle length cap are equal.

Plaut \emph{et al.}~\cite{Plaut16:Fast} gave a counterexample to the algorithm of Glorie \emph{et al.}~\cite{Glorie14:Kidney}. Figure~\ref{fig:counterexample} gives a counterexample to the algorithm of Plaut \emph{et al.}~\cite{Plaut16:Fast}; this is also a counterexample to the original algorithm due to Glorie \emph{et al.}~\cite{Glorie14:Kidney}.

\begin{figure}[ht!bp]
\centering
\resizebox{!}{1.5 in}{ % ! means scale proportionally
\begin{tikzpicture}[->,>=stealth',shorten >=1pt,auto,node distance=3cm,
  thick,main node/.style={circle,fill=blue!20,minimum size=15mm,draw,font=\sffamily\Large\bfseries}]

  \node[main node] (1) {$p_1$};
  \node[main node, fill=white] (0) [below left of=1] {$a$};
  \node[main node] (2) [below right of=1] {$p_2$};
  \node[main node] (3) [right of=2] {$p_3$};
  \node [main node] (4) [right of=3] {$p_4$};
  \node[main node] (5) [below right of=0] {$p_5$};

  \path[every node/.style={font=\sffamily\normalsize}]
    (0) edge [left] node[above left] {$0$} (1)
          edge [left] node[below left] {$1$} (5)
    (1) edge [left]  node[above right] {$0$} (2)
    (2) edge node[above] {$0$} (3)
    (3) edge [left] node[above] {$0$} (4)
    (4) edge [bend right] node[above] {$-2$} (1)
    (5) edge [right] node[below right] {$0$} (2)
      ;
\end{tikzpicture}
}
\caption{Example where the algorithm of Plaut \emph{et al.}~\cite{Plaut16:Fast} fails to find a negative chain for $L = 3$ and $K = 5$, although one exists.  Here, vertex $a$ is an altruist ($a \in A$) and the rest of the vertices are incompatible donor-patient pairs ($\{p_1,\ldots,p_5\} \in P$).}\label{fig:counterexample}
\end{figure}
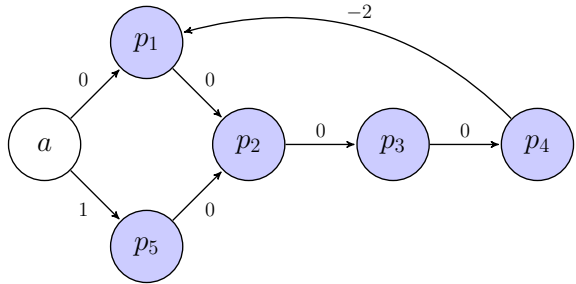

For $L = 3$ and $K=5$, there are no valid negative cycles in the reduced graph, and there is a single valid negative chain in the graph: $(a, p_5, p_2, p_3, p_4, p_1)$. Although $(p_1, p_2, p_3, p_4)$ is a cycle with negative weight, it exceeds the cycle length cap of $L=3$, and thus is invalid.

In the second iteration of the algorithm due to Plaut \emph{et al.}~\cite{Plaut16:Fast}, vertex $p_2$ would store as its most promising predecessor the path $(a, p_1, p_2)$ with weight $\pathweight{(a,p_1,p_2)}=0$, instead of $(a, p_5, p_2)$ with less promising weight $\pathweight{(a, p_5, p_2)}=1$. However, this causes the algorithm to miss the overall negative chain that would be found otherwise at iteration $5$, since it cannot reuse vertex $p_1$ (and thus cannot use the sole negative-weight edge with sink $p_1$).  Critically, even though the path $(a, p_5, p_2)$ was not promising at an earlier iteration, following it instead of the more immediately promising $(a, p_1, p_2)$ would have led to a negative-weight chain---in this case, the only negative-weight chain.  The initial algorithm due to Glorie \emph{et al.}~\cite{Glorie14:Kidney} would also incorrectly return that no negative-weight chains exist, by similar reasoning.

This shows a correctness error in both Glorie \emph{et al.}~\cite{Glorie14:Kidney} and Plaut \emph{et al.}~\cite{Plaut16:Fast}.  In the next section, we show that in general such polynomial-time approaches are hopeless: determining whether a positive price chain exists is NP-complete.

\section{Main result}

We define the \emph{negative chain problem} as follows: given a directed graph $G = (V, E)$, where $V = P\cup A$, is there a path (using each vertex at most once) of negative weight, using at most $K$ edges, and starting at some vertex $a\in A$? We call such a path a negative chain.

\begin{theorem}\label{thm:hardness}
Deciding whether a negative chain exists is NP-complete.
\end{theorem}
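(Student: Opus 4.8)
The plan is to prove both halves of the NP-completeness claim: membership in NP and NP-hardness. Membership is immediate: a negative chain is itself a polynomial-size certificate. Given a vertex sequence $v_0, v_1, \ldots, v_k$, one checks in polynomial time that $v_0 \in A$, that $(v_{i-1}, v_i) \in E$ for every $i$, that no vertex repeats, that $k \le K$, and that $\sum_{i=1}^{k} w_{(v_{i-1},v_i)} < 0$. Hence the negative chain problem lies in NP, and the work is in the hardness reduction.

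For hardness I would reduce from the directed Hamiltonian path problem, which is well known to be NP-complete: given a digraph $H = (V_H, E_H)$ with $n = |V_H| \ge 2$, decide whether $H$ has a simple directed path visiting all $n$ vertices. From $H$, construct a negative chain instance by taking $P = V_H$, introducing a single altruist $A = \{a\}$, keeping every edge of $E_H$ with weight $-1$, adding an edge $(a,v)$ of weight $n-2$ for each $v \in V_H$, and setting $K = n$. This transformation is clearly computable in polynomial time.

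The correctness argument is the core of the proof. Any chain in the constructed graph must start at the unique altruist $a$ and then trace a simple path $v_1, \ldots, v_k$ inside $H$, using exactly one $a$-edge (weight $n-2$) and $k-1$ edges of $H$ (each of weight $-1$), for a total weight of $(n-2) - (k-1) = n - 1 - k$. Since $a$ can occur only at the start and the path is simple, $k \le n$, which already matches the cap $K = n$; and $n - 1 - k < 0$ holds if and only if $k = n$, i.e.\ if and only if $v_1, \ldots, v_n$ visits all $n$ vertices of $H$ and is therefore a Hamiltonian path. Conversely, prepending $a$ to any Hamiltonian path of $H$ yields a valid chain of weight $-1 < 0$ using exactly $n = K$ edges. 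Thus the constructed instance admits a negative chain if and only if $H$ has a Hamiltonian path, which establishes NP-hardness.

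I expect the only genuinely delicate step to be the calibration of the edge weights: the value $n-2$ on the altruist edges is chosen precisely so that a chain becomes negative exactly when it is forced to span every vertex of $H$, so that ``has negative weight'' is made synonymous with ``is Hamiltonian.'' Slightly different constants, or a uniform rescaling, would serve equally well, but this is the point where the arithmetic must be checked carefully; everything else—NP membership, polynomial-time computability of the reduction, and the structural observation that a chain out of the sole altruist is simply $a$ followed by a simple path in $H$—is routine.
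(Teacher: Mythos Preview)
Your proof is correct and follows essentially the same approach as the paper: both argue NP membership via the obvious certificate and prove NP-hardness by the identical reduction from directed Hamiltonian path, adding a single altruist $a$ with edges of weight $n-2$ to every vertex of $H$, giving all edges of $H$ weight $-1$, and setting $K=n$. Your weight calculation $n-1-k$ and the conclusion that negativity forces $k=n$ match the paper's argument exactly (the paper phrases it via the edge count $m=k-1$ in $H$, but the arithmetic is the same).
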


\begin{proof}
The negative chain problem is trivially in NP: simply sum the edge weights in a proposed path and check its sign. To show NP-hardness, we reduce from the directed Hamiltonian path problem. Given some graph $H = (V, E)$, the directed Hamiltonian path problem asks whether there exists a directed path that visits each vertex exactly once.
Let $n = |V|$.  Construct the graph $G$ as follows: set $w_e = -1$ for each $e\in E$, and add a vertex $a$ with an edge $(a, v)$ with $w_{(a,v)} = n - 2$ for each $v\in V$. Figure~\ref{fig:hamiltonian} gives an example of the construction of the graph $G$ for the proof. Let $P = V$, $A = \{a\}$, and $K = n$.

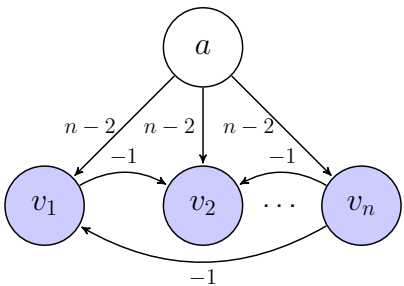
\begin{figure}[ht!bp]
\centering
\resizebox{!}{1.5 in}{ % ! means scale proportionally
\begin{tikzpicture}[->,>=stealth',shorten >=1pt,auto,node distance=3cm,
  thick,main node/.style={circle,fill=blue!20,minimum size=15mm,draw,font=\sffamily\Large\bfseries}]

  \node[main node] (1) {$v_1$};
  \node[main node] (2) [right of=1] {$v_2$};
  \node[main node] (3) [right of=2] {$v_n$};
  \node [main node, fill=white] (4) [above of=2] {$a$};

  \path[every node/.style={font=\sffamily\normalsize}]
    (1) edge [bend left]  node[above] {$-1$} (2)
    (2)  -- node[auto=false]{\Large\dots} (3)
    (3) edge [bend left] node[below] {$-1$} (1)
         edge [bend right] node[above] {$-1$} (2)
    (4) edge [left] node[left] {$n - 2$} (1)
    	 edge [left] node[left] {$n - 2$} (2)
	 edge [left] node[left] {$n - 2$} (3)
      ;
\end{tikzpicture}
}
\caption{Example construction for the proof of Theorem~\ref{thm:hardness}.}\label{fig:hamiltonian}
\end{figure}

Suppose $h$ is a Hamiltonian path in $H$ starting at $v_i$. Let $c = (a, v_i) \cup h$. Since $h$ has exactly $n-1$ edges, $c$ contains $n$ edges, thereby satisfying the length constraint. Since $h$ visits each $v_i$ exactly once and never visits $a$, $c$ visits each vertex in $G$ at most once.  Finally, since $h$ has weight $1 - n$, $c$ has weight $n-2 + 1 -n = -1$. Therefore $c$ is a negative chain in $G$.

% BP: Tuomas mentioned that the spacing around the \cup symbol was inconsistent, but I think it's actually the way Latex is choosing to display it, not the TeX. We could manually edit the spacing at the end, but I'm hesitant to do it now, since it will change as we revise.
Suppose $c$ is a negative chain in $G$. Then $c$ must begin at $a$, so we can write $c = (a, v_i) \cup h$, for some $v_i \in V$ and path $h$. Let $m$ be the number of edges in $h$. Then $w_c = n - 2 - m$. Since $w_c < 0$, we have $m > n - 2$. Since $c$ can use each vertex at most once, we have $m \leq n - 1$. Therefore $m = n - 1$. Since $h$ has $n-1$ edges, $h$ visits every vertex in $V$ exactly once, making it a valid Hamiltonian path in $H$.
\end{proof}

The general pricing problem (where both cycles and chains are included) is to find a positive price (negative weight) cycle of length at most $L$ or a positive price (negative weight) chain of length at most $K$, or show that none exist. Note that solving the general pricing problem does not necessarily solve the negative chain problem. If $X$ is the set of negative chains and $Y$ is the set of negative cycles, the general pricing problem is to determine whether $X\cup Y = \emptyset$. The negative chain problem is to determine whether $X = \emptyset$: however, determining whether $X\cup Y =\emptyset$ does not necessarily determine whether $X = \emptyset$.

To show that the general pricing problem is NP-hard, we modify the above construction by expanding each edge in $H$ to a series of $L$ edges whose weights sum to \num{-1}. Then any cycle in $G$ has length at least $2L$, which violates the length constraint. Since there are no valid negative cycles in $G$, the pricing problem becomes equivalent to the negative chain problem. Therefore, the general pricing problem is also NP-hard. Since the general pricing problem is also trivially in NP, it is NP-complete.

\section{Conclusion \& implications}
We discussed branch-and-price-based approaches to the kidney exchange problem, and showed that solving the pricing problem for chains, and thereby the pricing problem for cycles and chains jointly, is NP-complete. This shows a correctness error in two leading branch-and-price-based solvers. The results apply to other barter exchanges as well, as long as they use chains (potentially with cycles as well).

Our hardness results show that a different approach for handling chains is necessary. Dickerson \emph{et al.}~\cite{Dickerson16:Position} introduce models where chains are represented by position-indexed edge variables. Since there are only a polynomial number of edge variables, they can be fully enumerated, removing the need for branch and price for chains. Cycles in these models can still be handled via branch and price, or via a different scheme.

\section*{Acknowledgements}
This material was funded by NSF grants IIS-1320620, IIS-1546752, CCF-1101668, and IIS-0964579, by the ARO under award W911NF-16-1-0061, and by a Facebook Fellowship.

%\section*{References}
\bibliographystyle{elsarticle-num}
\bibliography{dairefs}

\end{document}